\newtheorem{theorem}{Theorem}[section]
\newtheorem{lemma}[theorem]{Lemma}
\newtheorem{prop}[theorem]{Proposition}
\newcommand{\average}[1]{\mbox{$\langle #1 \rangle$}}
\newcommand{\Tr}[1]{\mbox{$\mathrm{Tr}[#1]$}}
\newcommand{\be}{\begin{equation}}
\newcommand{\ee}{\end{equation}}
\newcommand{\ba}{\begin{eqnarray}}
\newcommand{\ea}{\end{eqnarray}}
\newcommand{\ban}{\begin{eqnarray*}}
\newcommand{\ean}{\end{eqnarray*}}
\begin{document}

\title{Robust self testing of the 3-qubit $W$ state}
\author{Xingyao Wu}
\affiliation{Centre for Quantum Technologies, National University of Singapore, 3 Science Drive 2, 117543, Singapore}
\author{Yu Cai}
\affiliation{Centre for Quantum Technologies, National University of Singapore, 3 Science Drive 2, 117543, Singapore}
\author{Tzyh Haur Yang}
\affiliation{Centre for Quantum Technologies, National University of Singapore, 3 Science Drive 2, 117543, Singapore}
\author{Huy Nguyen Le}
\affiliation{Centre for Quantum Technologies, National University of Singapore, 3 Science Drive 2, 117543, Singapore}
\author{Jean-Daniel Bancal}
\affiliation{Centre for Quantum Technologies, National University of Singapore, 3 Science Drive 2, 117543, Singapore}
\author{Valerio Scarani}
\affiliation{Centre for Quantum Technologies, National University of Singapore, 3 Science Drive 2, 117543, Singapore}
\affiliation{Department of Physics, National University of Singapore, 2 Science Drive 3, 117542, Singapore}

\begin{abstract}
Self-testing is a device independent method which can be used to determine the nature of a physical system or device, without knowing any detail of the inner mechanism or the physical dimension of Hilbert space of the system. The only information required are the number of measurements, number of outputs of each measurement and the statistics of each measurement. Earlier works on self testing restricted either to two parties scenario or multipartite graph states. Here, we construct a method to self-test the three-qubit $W$ state, and show how to extend it to other pure three-qubit states. Our bounds are robust against the inevitable experimental errors.

\end{abstract}

\begin{widetext}
\maketitle
\end{widetext}

\section{Introduction}

The certification of a quantum state is an important step in many quantum information tasks. When the device is provided by an untrusted vendor, it is beneficial to certify the device under minimal assumptions. Device-independent protocols are developed for this purpose. The device-independent approach describes the experiment from the observed statistics of measurement results, without any assumptions the inner workings of the devices and the description of the physical system, other than the fact that quantum theory holds. The possibility of such a certification is the applied offspring of Bell's theorem: by sharing entangled quantum states, distant parties could establish correlation stronger than those obtained by shared randomness only. Furthermore, some specific statistics require the parties to share a particular quantum state, up to local isometries (because certainly such a blind assessment cannot characterize local unitaries, nor the presence of other degrees of freedom that are not measured). Self-testing refers to such cases.

The word comes from the pioneering paper of Mayers and Yao in 2004 \cite{MayersYao2004} that described a criterion for the self-testing of the maximally entangled state of two qubits. The self-testing of the same state had been demonstrated previously \cite{sw87,pr92,Tsirelson93} using the maximal violation of the CHSH inequality \cite{CHSH}. These proofs were simplified and made robust to small deviations from the ideal case~\cite{McKague2012Robust}. Recently it was noticed that the maximally entangled state of two qubits is not the only one that can be self-tested: McKague extended self-testing to all multipartite graph states~\cite{matthew}, Yang and Navascu\'es to any pure bipartite entangled state~\cite{Yang2013Robust}. Self-testing has also been extended to deal with parallel repetition scenarios \cite{miller,vazirani}. Recently, the robustness of several of these results have been imporved and new states, including qutrit states, have been self-tested~\cite{prl}. In this paper, we present the self-testing of the $W$ state of three qubits~\cite{Duer2000Three}
\ba
\ket{W_3}&=&\frac{1}{\sqrt{3}}(\ket{001}+\ket{010}+\ket{100})\label{eq:W3}
\ea and of other pure 3-qubit states which are not graph states.

This paper is organized as follows. In section~\ref{sec:w_state} we first give a review of the 2-qubit self-testing, and then we show how to self test the $W_3$ state. Section~\ref{sec:robustness} is devoted to show the method we proposed for self-testing $W_3$ state is robust under the experimental statistics fluctuations. In section~\ref{sec:general}, we extend our method to more general 3-qubit states.

\section{Self-testing of {\bf the} $W_{3}$ state}
\label{sec:w_state}

\subsection{Mayers-Yao test for two qubits}

Firstly, we introduce the notions of self-testing by reviewing the Mayers-Yao test for the maximally entangled state of two qubits, which will also be useful below. Two devices, each of which allegedly performs non-commuting measurement of a qubit belonging to a maximally entangled pair, are given to Alice and Bob, who treat them as black boxes: they can only query them with several possible settings; to any query, the box produces an outcome. We consider the case where the outcomes are all binary. Confident that quantum theory is a correct description of physics, Alice and Bob can assign an unknown quantum state to the content of the boxes. In fact, they can assume the state to be pure by considering that the boxes include, if needed, purifying degrees of freedom (this is not an adversarial scenario, in which a third party may benefit from holding a purification). Since the number of degrees of freedom under study is not bounded, they can assume that the measurements on the state are projective. This means that for any setting of Alice there exist an otherwise unknown pair of projectors $P_{M_A=\pm 1}$, and the same for Bob. Alice and Bob assume that the boxes behave locally, i.e. $[P_{M_A=a},P_{N_B=b}]=0$ for all measurements $(M_A,N_B)$ and for all outcomes. This is all that can be said \textit{a priori}; the rest of the evidence is constituted by the observation \textit{a posteriori} of the statistics produced by querying the boxes.

At this point, we need to define precisely what self-testing means, since the mapping of classical statistics to quantum system is one-to-many. We say that an unknown quantum state $\ket{\psi}_{AB}$ is self-tested into a well-defined two qubit state $\ket{\phi}$ if there exist a local isometry $\Phi=\Phi_{A}\otimes\Phi_{B}$ such that
\begin{align}
\Phi\ket{\psi}_{AB}\ket{00}_{A^{'}B^{'}}&=\ket{\textrm{junk}}_{AB}\ket{\phi}_{A^{'}B^{'}} \nonumber\\
\Phi M_{A}N_{B}\ket{\psi}_{AB}\ket{00}_{A^{'}B^{'}}&=\ket{\textrm{junk}}_{AB}(\sigma_{m}\otimes\sigma_{n}\ket{\phi^{+}}_{A^{'}B^{'}}), \label{definition}
\end{align}
where $\ket{00}_{A^{'}B^{'}}$ is a product state ancilla attached by Alice and Bob locally into their system for the sake of {\bf the} argument. The $\sigma$-s here are the Pauli matrices acting on the ancilla qubit space. Thus, the unknown state $\ket{\psi}_{AB}$ and the unkown measurements $\{M_A,N_B\}$ can be mapped to the two qubit state $\ket{\phi}$ and suitable Pauli matrices respectively. The local isometry takes care of the degeneracy of the problem with respect to local unitaries and the addition of irrelevant degrees of freedom.

It is crucial to understand that the isometry is a \textit{virtual} protocol, not one to be implemented in the lab (the requirement of having trusted qubits being at odds with the device-independent nature of the task). The performance of the unknown boxes in this local isometry has to be assessed at the mathematical level, on the basis of the observed statistics: these statistics come from \textit{direct query} of the boxes, as explained above. For instance, instead of working with the projectors, it will be convenient to work with unitary hermitian operators of the form $M_A=P_{M_A=+1}-P_{M_A=-1}$: expressions like $M_A^2=\mathbb{I}$ are obvious properties of those mathematical objects and have nothing to do with assuming something on repeated measurements.
 
The Mayers-Yao test involves two dichotomic measurements $(X_A,Z_A)$ for Alice and three dichotomic measurements $(X_B,Z_B,D_B)$ for Bob. If the following statistics of the measurements are observed:
\begin{align}
\bra{\varPsi}Z_{A}Z_{B}\ket{\varPsi}=\bra{\varPsi}X_{A}X_{B}\ket{\varPsi}=1\,,\\
\bra{\varPsi}X_{A}Z_{B}\ket{\varPsi}=\bra{\varPsi}Z_{A}X_{B}\ket{\varPsi}=0\,,\\
\bra{\varPsi}Z_{A}D_{B}\ket{\varPsi}=\bra{\varPsi}X_{A}D_{B}\ket{\varPsi}=\frac{1}{\sqrt{2}}\,,
\end{align}
then indeed the content of the boxes can be self-tested into the maximally entangled state of two qubits and the suitable complementary Pauli matrices \cite{MayersYao2004,McKaguethesis}.

\subsection{Criterion for self-testing of $W_3$}\label{sectionIIb}

Consider now that Alice, Bob and Charlie share a quantum state, with the local measurements $M_A$, $N_B$ and $L_C$ respectively. To be able to self test the $W$ state \eqref{eq:W3}, we need to identify statistics that witness the state uniquely in the sense of (\ref{definition}): there exist a local isometry $\Phi = \Phi_A \otimes \Phi_B \otimes \Phi_C$ such that 
\begin{align}
\label{isometry}
&\Phi \left(M_A N_B L_C \ket{\varPsi}_{ABC}\ket{000}_{A'B'C'}\right) \nonumber\\
&=\ket{\textrm{junk}}_{ABC}( \sigma_m \otimes \sigma_n \otimes \sigma_l \ket{W_3}_{A'B'C'}). 
\end{align}
Our construction is based on a reduction to the Mayers-Yao two-qubit self-testing. Each $\Phi_Q$, $Q=A,B,C$, is the same that was used for the Mayers-Yao test \cite{McKague2012Robust}. The ancilla qubit of each party is initialized to $\ket{0}$, so that initial state of the whole system, including the device and the ancillae is $\ket{\varPsi}_{ABC}\ket{000}_{A'B'C'}$. The whole circuit of the isometry is represented in Fig.~\ref{fig:isometry}.

\begin{figure}[htbp!]
\centering
\includegraphics[width=0.4\textwidth]{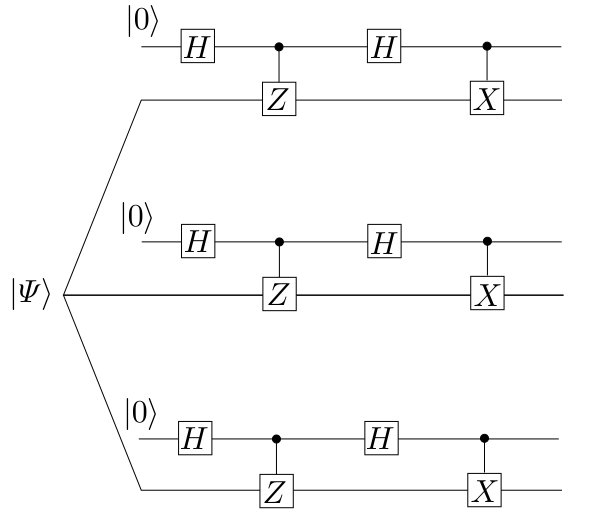}
\caption{The local isometry used to self-test $W_3$ state. The $X$ and $Z$ are the unitary operators defined by the projective measurements of the box, via $M_A=P_{M_A=+1}-P_{M_A=-1}$. The $H$ here are the standard Hadamard gate while the control-$X$ and control-$Z$ apply $X$ and $Z$ respectively when the control qubit is in the state $\ket{1}$.}
\label{fig:isometry}
\end{figure}

Now we can proceed to prove the self-testing criterion:
\begin{theorem}
\label{lemma0}
Alice, Bob and Charlie, spatially separated, each performs two measurements with binary outcomes on an unknown shared quantum state $\ket{\Psi}$. The $W_3$ state is self-tested if the following statistics are observed:
\begin{align}
\label{require1}
&\bra{\varPsi}P_{A}^{0}P_{B}^{0}P_{C}^{1}\ket{\varPsi}=\bra{\varPsi}P_{A}^{0}P_{B}^{1}P_{C}^{0}\ket{\varPsi}=\bra{\varPsi}P_{A}^{1}P_{B}^{0}P_{C}^{0}\ket{\varPsi}\nonumber\\&=\dfrac{1}{3}\\
&\begin{cases}
\label{require2}
\bra{\varPsi}P_{A}^{0}X_{B}X_{C}\ket{\varPsi}=-\bra{\varPsi}P_{A}^{0}Z_{B}Z_{C}\ket{\varPsi}=\dfrac{2}{3}\\
\bra{\varPsi}P_{A}^{0}X_{B}D_{C}\ket{\varPsi}=-\bra{\varPsi}P_{A}^{0}Z_{B}D_{C}\ket{\varPsi}=\dfrac{1}{\sqrt{2}}\dfrac{2}{3}\\
\bra{\varPsi}P_{A}^{0}X_{B}Z_{C}\ket{\varPsi}=0
\end{cases}\\
&\begin{cases}
\label{require3}
\bra{\varPsi}P_{B}^{0}X_{A}X_{C}\ket{\varPsi}=-\bra{\varPsi}P_{B}^{0}Z_{A}Z_{C}\ket{\varPsi}=\dfrac{2}{3}\\
\bra{\varPsi}P_{B}^{0}X_{A}D_{C}\ket{\varPsi}=-\bra{\varPsi}P_{B}^{0}Z_{A}D_{C}\ket{\varPsi}=\dfrac{1}{\sqrt{2}}\dfrac{2}{3}\\
\bra{\varPsi}P_{B}^{0}X_{A}Z_{C}\ket{\varPsi}=0
\end{cases}
\end{align}
\normalsize
where ${P^0}\equiv P_{Z=+1}=\frac{1+Z}{2}$ and ${P^1}\equiv P_{Z=-1}=\frac{1-Z}{2}$ are projectors for the $Z$ measurement.
\end{theorem}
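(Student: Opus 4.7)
The plan is to compute the action of the isometry $\Phi=\Phi_A\otimes\Phi_B\otimes\Phi_C$ of Fig.~\ref{fig:isometry} on $\ket{\Psi}_{ABC}\ket{000}_{A'B'C'}$ directly and show the result equals $\ket{\text{junk}}_{ABC}\otimes\ket{W_3}_{A'B'C'}$. Since each $\Phi_Q$ sends $\ket{\psi}_Q\ket{0}_{Q'}$ to $\sum_{q=0,1} X_Q^{q}P_Q^{q}\ket{\psi}_Q\ket{q}_{Q'}$, the output is a sum over the eight strings $\ket{abc}_{A'B'C'}$ with physical coefficients $\ket{u_{abc}}\equiv X_A^aP_A^a\otimes X_B^bP_B^b\otimes X_C^cP_C^c\ket{\Psi}$. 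The proof therefore reduces to showing (i) $\ket{u_{abc}}=0$ whenever $abc\notin\{001,010,100\}$, and (ii) the three surviving vectors $\ket{u_{100}},\ket{u_{010}},\ket{u_{001}}$ coincide, at which point they factor out as a single $\ket{\text{junk}}$ multiplied by $\tfrac{1}{\sqrt{3}}(\ket{001}+\ket{010}+\ket{100})=\ket{W_3}$.

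For (i), I would use the commutation $X_QP_Q^1=P_Q^0X_Q$ to push every bit-flip through the projectors so that $\|\ket{u_{abc}}\|^2$ reduces to $\bra{\Psi}P_A^{a}P_B^{b}P_C^{c}\ket{\Psi}=p(a,b,c)$, the joint $Z$-measurement probability. Condition (\ref{require1}) forces $p(001)=p(010)=p(100)=1/3$, whose sum is $1$, so every other $p(a,b,c)$ vanishes and the five ``wrong'' vectors $\ket{u_{abc}}$ are identically zero.

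For (ii), write $\ket{u_1}=\ket{u_{100}}$, $\ket{u_2}=\ket{u_{010}}$, $\ket{u_3}=\ket{u_{001}}$. Dividing (\ref{require2}) by $\bra{\Psi}P_A^0\ket{\Psi}=2/3$ yields the Mayers--Yao correlations on the unnormalised vector $P_A^0\ket{\Psi}$ between $B$ and $C$, with the target Bell state $\ket{\Psi^+}_{B'C'}=(\ket{01}+\ket{10})/\sqrt{2}$; the sign pattern in (\ref{require2}) matches $\langle XX\rangle_{\Psi^+}=1$, $\langle ZZ\rangle_{\Psi^+}=-1$, $\langle XD\rangle_{\Psi^+}=1/\sqrt{2}$, $\langle ZD\rangle_{\Psi^+}=-1/\sqrt{2}$, with $D_C=(X_C+Z_C)/\sqrt{2}$. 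Applying $\Phi_B\otimes\Phi_C$ to $P_A^0\ket{\Psi}\ket{00}_{B'C'}$ then yields $\ket{\tau_A}\otimes\ket{\Psi^+}_{B'C'}$ for some vector $\ket{\tau_A}$, and reading off the $\ket{01}_{B'C'}$ and $\ket{10}_{B'C'}$ coefficients gives $\ket{u_2}=\ket{u_3}=\ket{\tau_A}/\sqrt{2}$. The symmetric argument, using (\ref{require3}) and conditioning on $P_B^0$, performs bipartite Mayers--Yao on $A$ and $C$ and gives $\ket{u_1}=\ket{u_3}$. Chaining the equalities and setting $\ket{\text{junk}}=\sqrt{3}\ket{u_1}$, which has unit norm because each $\|\ket{u_i}\|^2=1/3$, completes the state equation of (\ref{isometry}); the operator line of (\ref{definition}) then follows by inserting $M_AN_BL_C$ between $\ket{\Psi}$ and the isometry and invoking the same bipartite Mayers--Yao content, which also certifies that each $X_Q$ and $Z_Q$ acts as $\sigma_x$ and $\sigma_z$ on the corresponding ancilla.

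The step I expect to be the main obstacle is the invocation of bipartite Mayers--Yao on $P_A^0\ket{\Psi}$: strictly, the two-qubit theorem is formulated for a normalised pure bipartite state, whereas here $P_A^0\ket{\Psi}$ is both unnormalised and tripartite, with Alice's Hilbert space passively carried along through $\Phi_B\otimes\Phi_C$. I anticipate having to unfold Mayers--Yao at the level of its underlying operator identities, such as $X_BX_CP_A^0\ket{\Psi}=P_A^0\ket{\Psi}$, $Z_BZ_CP_A^0\ket{\Psi}=-P_A^0\ket{\Psi}$, and the anticommutation-on-vector relations extracted from the $D_C$ statistics, verify them directly on $P_A^0\ket{\Psi}$, and then track them through $\Phi_B\otimes\Phi_C$ to produce the factorised form $\ket{\tau_A}\otimes\ket{\Psi^+}_{B'C'}$. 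Once those identities are in hand, the remaining manipulations are routine algebra with $P_Q^i$ and $X_QP_Q^1=P_Q^0X_Q$.
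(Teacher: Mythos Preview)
Your plan is essentially the paper's proof: expand the isometry, kill the five unwanted branches using (\ref{require1}), and identify the three surviving ones via bipartite Mayers--Yao on the conditioned states $P_A^0\ket{\Psi}$ and $P_B^0\ket{\Psi}$. The paper carries this out by explicitly extracting the operator identities you list at the end (e.g.\ $P_A^0X_B\ket{\Psi}=P_A^0X_C\ket{\Psi}$ and the on-state anticommutators $P_A^0\{X_B,Z_B\}\ket{\Psi}=0$) and then pushing the $X$'s through the projectors in Appendix~\ref{app:isometry}; your ``black-box Mayers--Yao then unfold'' is the same thing phrased top-down.

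One slip to fix: in step (i) you invoke ``the commutation $X_QP_Q^1=P_Q^0X_Q$'' as if it were a given. It is not---that identity is equivalent to $\{X_Q,Z_Q\}=0$, which in the device-independent setting is precisely what has to be \emph{derived} from the $D_C$ statistics (and only holds on the relevant vectors, not as an operator equation). Fortunately you do not need it for (i): the norm $\|\ket{u_{abc}}\|^2=\bra{\Psi}P_A^aP_B^bP_C^c\ket{\Psi}$ follows immediately from $X_Q^\dagger X_Q=X_Q^2=\mathbb{I}$ alone. In (ii) you correctly anticipate deriving the anticommutation-on-vector relations from the $D_C$ correlations, which is exactly the content of the paper's equations (\ref{proof:eqns1})--(\ref{anticom4}); once those are in hand, the rest is indeed routine.
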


\begin{proof}

The isometry in Figure \ref{fig:isometry} gives as an output
\begin{align}
&\ket{\Psi'}=\Phi\ket{\varPsi}_{ABC}\ket{000}_{A'B'C'} \nonumber\\
=&\dfrac{1}{8}[
(1+Z_{A})(1+Z_{B})(1+Z_{C})\ket{\varPsi}\ket{000}\nonumber\\
&+(1+Z_{A})(1+Z_{B})X_{C}(1-Z_{C})\ket{\varPsi}\ket{001}\nonumber\\
&+(1+Z_{A})X_{B}(1-Z_{B})(1+Z_{C})\ket{\varPsi}\ket{010}\nonumber\\
&+(1+Z_{A})X_{B}(1-Z_{B})X_{C}(1-Z_{C})\ket{\varPsi}\ket{011}\nonumber\\
&+X_{A}(1-Z_{A})(1+Z_{B})(1+Z_{C})\ket{\varPsi}\ket{100}\nonumber\\
&+X_{A}(1-Z_{A})(1+Z_{B})X_{C}(1-Z_{C})\ket{\varPsi}\ket{101}\nonumber\\
&+X_{A}(1-Z_{A})X_{B}(1-Z_{B})(1+Z_{C})\ket{\varPsi}\ket{110}\nonumber\\
&+X_{A}(1-Z_{A})X_{B}(1-Z_{B})X_{C}(1-Z_{C})\ket{\varPsi}\ket{111}] \label{general}\\
&=\,\sum_{a,b,c\in\{0,1\}} X_A^aX_B^bX_C^c\,P_A^aP_B^bP_C^c \ket{\varPsi}\ket{abc}\,.
\end{align}
Observation (\ref{require1}) implies that $\bra{\varPsi}P_{A}^{0}P_{B}^{0}P_{C}^{1}\ket{\varPsi}+\bra{\varPsi}P_{A}^{0}P_{B}^{1}P_{C}^{0}\ket{\varPsi}+\bra{\varPsi}P_{A}^{1}P_{B}^{0}P_{C}^{0}\ket{\varPsi}=1$: therefore $P_{A}^{a}P_{B}^{b}P_{C}^{c}\ket{\varPsi}=0$ for the other five projectors.

From the fact that $\langle\varphi\ket{\psi}=1$ implies $\ket{\varphi}=\ket{\psi}$ and ${P_A^0}^2=P_A^0$, observation (\ref{require2}) implies after some manipulation
\begin{align}
P_{A}^{0}X_{B}\ket{\varPsi}&=P_{A}^{0}X_{C}\ket{\varPsi},\nonumber\\
P_{A}^{0}Z_{B}\ket{\varPsi}&=-P_{A}^{0}Z_{C}\ket{\varPsi},\nonumber\\
P_{A}^{0}X_{B}\ket{\varPsi}&\perp P_{A}^{0}Z_{B}\ket{\varPsi},\nonumber\\
P_{A}^{0}D_{C}\ket{\varPsi}&=\dfrac{P_{A}^{0}X_{B}-P_{A}^{0}Z_{B}}{\sqrt{2}}\ket{\varPsi},\nonumber\\
&=\dfrac{P_{A}^{0}X_{C}+P_{A}^{0}Z_{C}}{\sqrt{2}}\ket{\varPsi}. \label{proof:eqns1}
\end{align} 
Similarly, from (\ref{require3}), we obtain the following 
\begin{align}
P_{B}^{0}X_{A}\ket{\varPsi}&=P_{B}^{0}X_{C}\ket{\varPsi},\nonumber\\
P_{B}^{0}Z_{A}\ket{\varPsi}&=-P_{B}^{0}Z_{C}\ket{\varPsi},\nonumber\\
P_{B}^{0}X_{A}\ket{\varPsi}&\perp P_{B}^{0}Z_{C}\ket{\varPsi},\nonumber\\
P_{B}^{0}D_{C}\ket{\varPsi}&=\dfrac{P_{B}^{0}X_{A}-P_{B}^{0}Z_{A}}{\sqrt{2}}\ket{\varPsi},\nonumber\\
&=\dfrac{P_{B}^{0}X_{C}+P_{B}^{0}Z_{C}}{\sqrt{2}}\ket{\varPsi}\label{proof:eqns2}
\end{align}
The last identity in (\ref{proof:eqns1}) implies
\begin{align}
&(P_{A}^{0}D_{C})^{2}\ket{\varPsi}=P_{A}^{0}D_{C}^{2}\ket{\varPsi}\nonumber\\
&=P_{A}^{0}\ket{\varPsi}=P_{A}^{0}(X_{B}-Z_{B})^{2}\ket{\varPsi}.
\end{align}
Since $X_{B}^{2}=Z_{B}^{2}=D_{C}^{2}=1$, we have
\begin{align}
P_{A}^{0}X_{B}Z_{B}\ket{\varPsi}=-P_{A}^{0}Z_{B}X_{B}\ket{\varPsi},\label{xzb}
\end{align}
and similarly,
\begin{align}
\label{anticom2}
&P_{A}^{0}X_{C}Z_{C}\ket{\varPsi}=-P_{A}^{0}Z_{C}X_{C}\ket{\varPsi}\\
\label{anticom3}
&P_{B}^{0}X_{A}Z_{A}\ket{\varPsi}=-P_{B}^{0}Z_{A}X_{A}\ket{\varPsi}\\
\label{anticom4}
&P_{B}^{0}X_{C}Z_{C}\ket{\varPsi}=-P_{B}^{0}Z_{C}X_{C}\ket{\varPsi}.
\end{align}

All these properties of the operators deduced from the measurement requirements will help to reduce the general output (\ref{general}) to
\begin{align}
\label{aim}
\ket{\tilde{\Psi}}=P_{A}^{0}P_{B}^{0}P_{C}^{0}X_{C}\ket{\varPsi}(\ket{001}+\ket{010}+\ket{100}),
\end{align}
see the proof in Appendix~\ref{app:isometry}. This state can be normalized into the form of $\ket{\textrm{junk}}_{ABC}\ket{W_{3}}_{A'B'C'}$ . Thus we have proven that with these requirements, (\ref{require1}), (\ref{require2}) and (\ref{require3}) on the measurement results indeed self test the unknown state as a $W_{3}$ state.

This completes the self-testing of the $W_{3}$ state. As for self-testing the measurements, it can be shown that (\ref{isometry}) holds indeed by putting operators in front of (\ref{general}) and going through the same steps.
\end{proof}

\section{Robustness}
\label{sec:robustness}
Interesting as the above result is in itself, it relies on observing the measurement statistics in (\ref{require1}), (\ref{require2}) and (\ref{require3}) exactly, which is not possible due to inevitable experimental uncertainties. 

Suppose each observation in (\ref{require1}), (\ref{require2}) and (\ref{require3}) has a deviation at most equal to $\varepsilon$ around the perfect value. For instance, for the first 3 observations,

\begin{align}
\begin{cases}
\label{erequire1}
&|\bra{\varPsi}P_{A}^{0}P_{B}^{0}P_{C}^{1}\ket{\varPsi}-1/3|\leq\varepsilon\\
&|\bra{\varPsi}P_{A}^{0}P_{B}^{1}P_{C}^{0}\ket{\varPsi}-1/3|\leq\varepsilon\\
&|\bra{\varPsi}P_{A}^{1}P_{B}^{0}P_{C}^{0}\ket{\varPsi}-1/3|\leq\varepsilon\\
\end{cases}.
\end{align}

We present two approaches to robustness: the first one (subsection \ref{ss:a}) is based on the analytic method first proposed in~\cite{McKague2012Robust}; the second one (subsection \ref{ss:swap}) uses techniques based on semi-definite programming, following \cite{prl,swap}. Both approaches are converted to fidelity for comparison. As we are going to see, the second method gives a much higher robustness.

\subsection{Analytic bound on the norm}
\label{ss:a}

Assuming the experiment statistics deviate from (\ref{require1}), (\ref{require2}), (\ref{require3}) by a small $\varepsilon$, the self-testing is robust if the isometry still extract a state close to the $W_3$ state in the sense
\begin{align}
\label{eq:robust}
||\ket{\Psi'}-\ket{\textrm{junk}}_{ABC}\ket{W_{3}}_{A'B'C'}||\leq f(\varepsilon),
\end{align}   
where $f(\varepsilon) \rightarrow 0$ when $\varepsilon \rightarrow 0$. 

We show in Appendix \ref{sec:detailcal} that:
\begin{align}
\label{analytical}
\|\ket{\Psi'}&-\ket{\textrm{junk}}_{ABC}\ket{W_{3}}_{A'B'C'}\|\nonumber\\
&\leq  7.5\varepsilon+119.2\varepsilon^{\frac{3}{4}}+49.4\varepsilon^{\frac{1}{4}}.
\end{align}
The proof is based on the argument that if the observation is close to the ideal, the properties of the operators must be close to the ideal operators. Hence, the state extracted using the isometry is close to the ideal one.

\subsection{Bound on the fidelity using semi-definite programs}
\label{ss:swap}
The second method to study robustness follows the technique presented in~\cite{swap} and, rather than using the 2-norm, uses the fidelity of the state (\ref{general}) on the ancillary qubits with the $W_3$ state. This fidelity can be expressed in term of expectation values. Explicitly:
\begin{align}\label{eq:fidel}
F=&\|\bra{W_3^{A'B'C'}}\Phi\ket{\varPsi}_{ABC}\ket{000}_{A'B'C'}\| \nonumber\\
=&\|\frac{1}{8\sqrt{3}}\{(1+Z_{A})(1+Z_{B})X_{C}(1-Z_{C})\ket{\varPsi} \nonumber\\
&+(1+Z_{A})X_{B}(1-Z_{B})(1+Z_{C})\ket{\varPsi} \nonumber\\
&+X_{A}(1-Z_{A})(1+Z_{B})(1+Z_{C})\ket{\varPsi}\}\| \nonumber\\
=&\sum_{i} \alpha_{i}\braket{M_A^i N_A^i O_A^i M_B^i N_B^i O_B^i M_C^i N_C^i O_C^i},
\end{align}
where $M, N, O \in \{1, Z, X\}$ and $\alpha_i$'s are appropriate coefficients.
Some of these average values can be measured, so they can be replaced by the observed values in the expression above. Those that contain both $X$ and $Z$ for a same qubit cannot be measured. Nevertheless, they are not unconstrained: for instance, in the ideal case we know that quantum mechanics necessarily implies tight equalities like $P_A^0(Z_BX_BZ_B)=-P_A^0X_B$ [equation \eqref{xzb}]. Relaxed versions of such constraints must still hold in the non-ideal case. In order to explain how we are going to implement these relaxed constraints, we start by stating the following Lemma \cite{miguel2007}:
\begin{lemma}
\label{positive}
Let \{$A_1,...,A_n$\} be a collection of operators. Then for any quantum state $\rho$ the orthogonal matrix M,
\begin{align}
(M)_{ij}=\Tr{\rho A_i^\dagger A_j},
\end{align}
is non-negative. 
\end{lemma}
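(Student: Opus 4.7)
The plan is to show that $M$ is positive semi-definite, i.e.\ that $v^{\dagger} M v \geq 0$ for every complex vector $v = (v_1,\ldots,v_n)^T$, which is the standard reading of ``non-negative'' here. The argument reduces the claim to the obvious fact that $\rho \geq 0$ together with linearity and cyclicity of the trace.

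First, I would expand the quadratic form using the definition of $M_{ij}$ and linearity of the trace:
\begin{align*}
v^{\dagger} M v
= \sum_{i,j} \bar{v}_i\, v_j\, \Tr{\rho A_i^\dagger A_j}
= \Tr{\rho\, B^\dagger B},
\end{align*}
where I define $B := \sum_{j} v_j A_j$, so that $B^\dagger = \sum_{i} \bar{v}_i A_i^\dagger$ and the double sum in the middle collapses into $B^\dagger B$. This step is purely algebraic.

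Next, I would invoke $\rho \geq 0$. Writing $\rho = \sqrt{\rho}\sqrt{\rho}$ and using cyclicity of the trace yields
\begin{align*}
\Tr{\rho B^\dagger B}
= \Tr{\sqrt{\rho}\,B^\dagger B\,\sqrt{\rho}}
= \Tr{(B\sqrt{\rho})^\dagger (B\sqrt{\rho})} \geq 0,
\end{align*}
since the trace of any $C^\dagger C$ is the squared Hilbert--Schmidt norm of $C$. Equivalently, using the spectral decomposition $\rho = \sum_k p_k \ket{\psi_k}\bra{\psi_k}$ with $p_k \geq 0$, one obtains $\Tr{\rho B^\dagger B} = \sum_k p_k \, \|B\ket{\psi_k}\|^2 \geq 0$. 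Since $v$ was arbitrary, $M \succeq 0$.

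Conceptually, the lemma simply says that $(X,Y) \mapsto \Tr{\rho X^\dagger Y}$ is a positive semi-definite sesquilinear form on the operator space, and $M$ is its Gram matrix on the list $\{A_1,\ldots,A_n\}$. There is no real obstacle to the proof; the work is a one-line identity, and the value of the lemma is downstream, where it will be used to enforce relaxed versions of operator equalities such as~\eqref{xzb} inside the SDP bounding the fidelity~\eqref{eq:fidel}: choosing $\{A_i\}$ to include the relevant monomials in $X_Q$, $Z_Q$ and the projectors $P_Q^0$ turns the unmeasurable moments into SDP variables constrained by $M \succeq 0$.
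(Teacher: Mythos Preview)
Your proof is correct and is the standard argument. Note, however, that the paper does not actually prove this lemma: it merely states it with a citation to~\cite{miguel2007}, so there is no in-paper proof to compare against. Your approach---reducing $v^\dagger M v$ to $\Tr{\rho B^\dagger B}$ with $B=\sum_j v_j A_j$ and invoking $\rho\geq 0$---is exactly the usual Gram-matrix argument underlying the NPA hierarchy, and nothing more is needed.
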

Explicitly, any matrix $M$ containing products of our operators \{$Z_A,X_A,Z_B,X_B,Z_C,X_C,D_C$\} with their adjoints, for instance
\scriptsize
\[ M = \left(  \begin{array}{ccccc}
1 & Z_A & Z_AX_A & X_B & Z_BX_C\\
Z_A & 1 & X_A & Z_AX_B & Z_AZ_BX_C\\
X_AZ_A & X_A & 1 & X_AZ_AX_B& X_AZ_AZ_BX_C\\
X_B & Z_AX_B & Z_AX_AX_B & 1 & X_BZ_BX_C\\
Z_BX_C & Z_AZ_BX_C & Z_AX_AZ_BX_C & Z_BX_BX_C & 1 
 \end{array} \right)\]
\normalsize
must be positive semidefinite when evaluated on any quantum state.

Now we would like to find the minimal possible value of $F$ compatible with the relaxed constraints (\ref{require1}), (\ref{require2}), (\ref{require3}) with deviations $\varepsilon_i$s and with quantum physics. The latter condition is equivalent to requiring the matrix $M$ built with \textit{all} the products of our operators to be positive semidefinite \cite{miguel2007}. Such a matrix is obviously infinite, thus impossible to use in practice. By requiring the positivity of a finite submatrix, however, one obtains a relaxation of the constraints. Now the minimization has become a semi-definite program (SDP):
\begin{equation*}
  \begin{array}{lrcll}
    \min 
    & F \\\\
    \text{s.t.}
    &(\ref{require1}), &(\ref{require2}), &(\ref{require3}) \text{ with errors } \varepsilon_i\text{s}\\
     & and & M\geq 0\\
  \end{array}
\end{equation*}
The SDP leads to a valid lower bound on the fidelity for two reasons: first, by choosing a particular finite $M$, we are minimizing over a larger set (fewer constraints) than the set of quantum correlations, whence the quantum minimum can only be higher; second, there is no guarantee that the isometry we started with is actually optimal for this task.

Note that even though the expression~\eqref{eq:fidel} does contain any moment involving the measurement $D$, its appearance in the matrix $M$ makes it useful to bound the fidelity.

In order to find a good bound, and in particular recover the perfect case $F=1$ when (\ref{require1})-(\ref{require3}) hold, the matrix $M$ must be large enough to contain at least all the average values $\braket{\cdot}$ that appear in the expression of $F$. Figure \ref{fig:swapW} shows the result of the SDP optimization for two choices of $M$. The higher dimension we choose, the more detailed the matrix should is, hence, the tighter the bound.

%\textbf{VS: two questions we can address:
%\begin{itemize}
%\item What is the ``trivial" value of $F$ in this case: the maximal overlap of $W_3$ with a product state? or that with a bipartite entangled state?
%\item What is the best experimentally observed fidelity? Leaving clear of course that for self-testing to work in full black box, one needs loophole-free experiments in principle.
%\end{itemize}}

\begin{figure}[htbp!]
\centering
\includegraphics[width=0.48\textwidth]{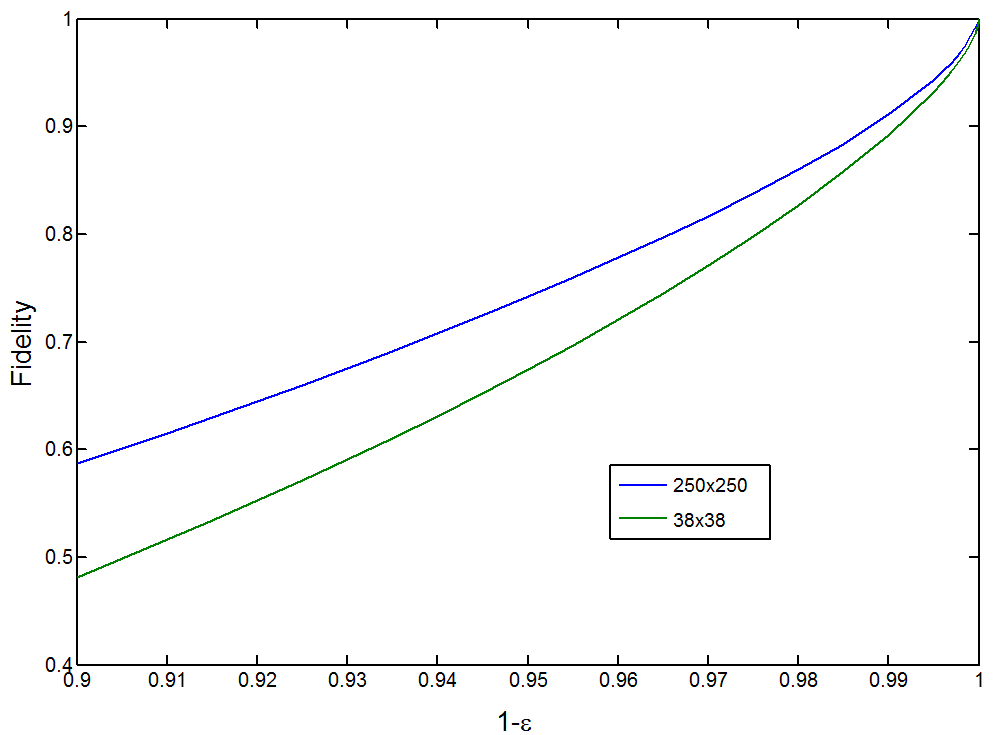}
\caption{Swap bound on the fidelity of the $W_3$ state for different matrix size of $M$. $250 \times 250$ represents the bound given by matrix $M$ of size $250$ and $38 \times 38$ represents the bound given by matrix $M$ of size $38$. The largest matrix corresponds to a relaxation of the NPA hierarchy~\cite{npa} at local level 2 (i.e. it includes any products with at most two operators per party)~\cite{Moroder}. If the fidelity is below $66.7\%$, it may not possess tripartite entanglement.}
\label{fig:swapW}
\end{figure}

\section{More general three qubit states}
\label{sec:general}
%%%%%%%%%%%%%%%%%%%%%%%%%%%%%%%%%%%%%%%%%%%%%%%%%%%%%%%%%%%%%%%%%%%%%%%%%%%%%%%
%%%%%%%%%%%%%%%%%  "general" 3 qubit state   %%%%%%%%%%%%%%%%%%%%%%%%%%%%%%%%%%
%%%%%%%%%%%%%%%%%%%%%%%%%%%%%%%%%%%%%%%%%%%%%%%%%%%%%%%%%%%%%%%%%%%%%%%%%%%%%%%
In the previous sections, we have shown how one can self test a $W_3$ state. This together with previous result on self testing of GHZ state \cite{matthew} shows that both representatives of the two inequivalent LOCC classes of three qubits~\cite{Dur} can be self-tested. The question then remains whether one can self-test every pure 3-qubits state. Here we explicitly shows how one can self-test a large family of 3 qubit states using bipartite inequalities.

\subsection{Reminder: self-testing of any pure 2-qubit state}

Firstly, let us review how the self testing of arbitrary qubit pairs works. It has been shown that any pure two qubit state in their Schmidt form
\begin{eqnarray}
\ket{\psi_\gamma} = \frac{1}{\sqrt{1+\gamma^2}} \left( \ket{01} + \gamma \ket{10} \right), 
\end{eqnarray}
can be self tested by observing the maximum violation of the tilted CHSH inequality~\cite{Yang2013Robust}: 
\begin{eqnarray}
\label{eqn:tilted}
\beta\left( \alpha, A_0,A_1,B_0,B_1 \right) &=& \alpha A_0 + A_0(B_0+B_1)\\ \nonumber
 &&+ A_1(B_0 - B_1) \leq \textbf{2+$\alpha$},
\end{eqnarray}
where $\alpha=\frac{2\gamma}{1-\gamma^2}$. Note that for simplicity, we have used the notations that $A_0,A_1,B_0$ and $B_1$ to represent the unknown measurements by Alice and Bob respectively.

The maximal quantum violation of this inequality is given by $\beta^* = \sqrt{8+2\alpha^2}$~\cite{acin2012randomness}, achievable with the following measurement settings 
\begin{eqnarray}
\label{eqn:titled-measurement}
A_0 &=& \sigma_z, \nonumber\\ 
A_1 &=& \sigma_x, \nonumber\\
B_0 &=& \cos\mu \sigma_z + \sin\mu \sigma_x, \nonumber\\ 
B_1 &=& \cos\mu \sigma_z - \sin\mu \sigma_x, 
\end{eqnarray}
where $\tan\mu = \frac{2\gamma}{1+\gamma^2}$.

\subsection{Self-testing of a family of pure 3-qubit states}

Now we use the same idea as in Section~\ref{sectionIIb} for the $W_3$ case: base the self-testing of 3-qubit states on a suitable chaining of two self-testing procedures for 2-qubit states. The class of 3 qubit states that can be self-tested with this approach is
\begin{eqnarray}
\label{eqn:general_three_qubit}
\ket{\psi_\gamma} = \frac{1}{\sqrt{2+\gamma^2}}\left(\ket{100} + \ket{010} + \gamma \ket{001} \right)_{ABC},
\nonumber\\
\end{eqnarray}
where $\gamma$ is a real number and $\gamma \neq 0$. 

Notice that the state is symmetric with respect to party A and B. For instance, if we partition the parties into A\textbar BC, we have
\begin{eqnarray}
\ket{\psi} = \frac{1}{\sqrt{2+\gamma^2}} \left(\ket{100}_{ABC}
+ \ket{0}_A ( \underbrace{\ket{10} + \gamma\ket{01}}_{\equiv\ket{\psi}_{BC}} )_{BC} \right). \nonumber\\ 
\end{eqnarray}
On the other hand, if we partition into B\textbar AC, we have
\begin{eqnarray}
\ket{\psi} = \frac{1}{\sqrt{2+\gamma^2}} \left( \ket{100}_{BAC}
+ \ket{0}_B ( \underbrace{\ket{10} + \gamma\ket{01}}_{\equiv\ket{\psi}_{AC}} )_{AC} \right). \nonumber\\
\end{eqnarray}
In either form above, the state $\ket{\psi}_{BC}$ (\textit{resp.} $\ket{\psi}_{AC}$) conditioned on the outcome "0" in the measurement in the $Z$ basis of A (\textit{resp.} B), violates the tilted CHSH inequality maximally. Measurements are set according to \eqref{eqn:titled-measurement}, with $\tan\mu = \frac{2\gamma}{1+\gamma^2}$. Note that Charlie performs the same measurement regardless of which partition we consider, due to the symmetry in the state. 

To sum it up: 
\begin{lemma}
\label{lemma2}{\bf  [Proof in Appendix~\ref{general3qubitproof}]}
Given three black boxes with two buttons each, labelled as $A_0, A_1, B_0, B_1, C_0, C_1$, the following statistics:
\begin{center}
\begin{eqnarray}
\average{P_A^1P_B^0} = \average{P_A^0P_B^1} = \average{P_A^0P_B^0}/\gamma^2 =  \frac{1}{2+\gamma^2}, \nonumber \\
\average{P_A^0 \beta(\alpha,B_0,B_1,C_0,C_1)} = \frac{\beta^*(1+\gamma^2)}{2+\gamma^2}, \nonumber \\
\average{P_B^0 \beta(\alpha,A_0,A_1,C_0,C_1)} = \frac{\beta^*(1+\gamma^2)}{2+\gamma^2},
\end{eqnarray}
\end{center}
where $\beta$ refers to a Bell expression as in~\eqref{eqn:tilted}, $\gamma,\beta^*$ some real number with $\beta^* = \sqrt{8+2\alpha^2}$, and $\alpha = \frac{2\gamma}{1-\gamma^2}$, constitute a self testing of a quantum state of the form \eqref{eqn:general_three_qubit}.
\end{lemma}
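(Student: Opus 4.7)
The plan is to adapt the chaining argument that proves Theorem~\ref{lemma0}, replacing the Mayers--Yao bipartite test with the Yang--Navascu\'es self-testing of the tilted CHSH. The role of Eqs.~\eqref{require1}--\eqref{require3} will be played by the three lines of observations in the lemma: the first fixes the $Z$-basis marginals on $A$ and $B$, and the other two drive a tilted-CHSH self-testing of the conditional states on $BC$ and $AC$.

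First I would extract the algebraic consequences of each observation. Summing the three equalities on the first line gives $\bra{\varPsi}(P_A^1P_B^0 + P_A^0P_B^1 + P_A^0P_B^0)\ket{\varPsi} = (1+1+\gamma^2)/(2+\gamma^2) = 1$, so the orthogonal sector vanishes and $P_A^1 P_B^1\ket{\varPsi} = 0$. Next, since the marginal gives $\bra{\varPsi}P_A^0\ket{\varPsi} = (1+\gamma^2)/(2+\gamma^2)$, the second observation says precisely that the unnormalized conditional vector $P_A^0\ket{\varPsi}$ attains, on $B$ and $C$, the maximal quantum violation $\beta^*$ of the tilted CHSH with parameter $\alpha$. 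Invoking Yang--Navascu\'es on this sector yields a system of operator identities $P_A^0\, f_k(B_0,B_1,C_0,C_1)\ket{\varPsi} = 0$; defining $Z_B = B_0$, $X_B = B_1$, and $Z_C$, $X_C$ as the linear combinations of $(C_0+C_1)$ and $(C_0-C_1)$ dictated by~\eqref{eqn:titled-measurement}, these play exactly the role of Eqs.~\eqref{proof:eqns1} in the proof of Theorem~\ref{lemma0}. The third observation, handled symmetrically with $P_B^0$ in place of $P_A^0$, gives the analogue of Eqs.~\eqref{proof:eqns2}. Consistency of Charlie's extracted operators across the two chained tests is automatic because the ideal angle $\mu$ in~\eqref{eqn:titled-measurement} is the same in both partitions.

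Then I would feed these operators into the isometry circuit of Fig.~\ref{fig:isometry} and expand the output exactly as in~\eqref{general}. The relation $P_A^1 P_B^1\ket{\varPsi} = 0$ kills the four terms with $a = b = 1$ in the ancilla sum; the $P_A^0$ identities eliminate the cross terms in the $\ket{010}$ and $\ket{001}$ branches and fix the ratio of their surviving coefficients to $1:\gamma$, reflecting the asymmetric weights in $\ket{\psi_\gamma}$; the $P_B^0$ identities perform the symmetric reduction for the $\ket{100}$ versus $\ket{001}$ branches and, together with the previous relations, also kill the $\ket{000}$ and $\ket{111}$ branches. What remains, after collecting a common $P_A^0 P_B^0 P_C^0 X_C\ket{\varPsi}$-type factor into the junk register, is exactly $\ket{\mathrm{junk}}_{ABC}\otimes\ket{\psi_\gamma}_{A'B'C'}$ as required.

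The hard part will be making the appeal to Yang--Navascu\'es rigorous on the \emph{conditional} tripartite vector $P_A^0\ket{\varPsi}$ rather than on a genuinely bipartite state. The cleanest way is to observe that $P_A^0$ commutes with all of Bob's and Charlie's operators, so Alice's Hilbert space together with the vector $P_A^0\ket{\varPsi}$ can be regarded as a fixed (unnormalized) purification of a state shared between Bob and Charlie, on which the Yang--Navascu\'es self-testing applies verbatim. A secondary technicality is the linear-algebra bookkeeping needed to recast the resulting identities from the $C_0,C_1$ basis into the $Z_C, X_C$ basis used by the isometry, but this is routine once the two chained tests are shown to produce the same operators on Charlie.
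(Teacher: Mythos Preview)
Your proposal is correct and follows essentially the same route as the paper's Appendix~\ref{general3qubitproof}: sum the first line to get $P_A^1P_B^1\ket{\varPsi}=0$, invoke the tilted-CHSH self-testing of~\cite{Yang2013Robust} on the conditional vectors $P_A^0\ket{\varPsi}$ and $P_B^0\ket{\varPsi}$ to extract operator identities, build $Z_C,X_C$ from $(C_0\pm C_1)$ with the common angle $\mu$, and then simplify the isometry output~\eqref{general} term by term. One small slip in your bookkeeping: $P_A^1P_B^1\ket{\varPsi}=0$ kills only the \emph{two} ancilla branches with $a=b=1$ (namely $\ket{110}$ and $\ket{111}$), not four; the remaining unwanted branches $\ket{000},\ket{011},\ket{101}$ are eliminated via the conditional identities (in the paper, e.g., $P_A^0P_B^0P_C^0\ket{\varPsi}=P_A^0P_C^1P_C^0\ket{\varPsi}=0$), exactly as you indicate in your last reduction step.
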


%We left the proof in Appendix \ref{general3qubitproof}.

General 3-qubit pure state can be written in a standard form with four amplitudes and one phase~\cite{acin2000}. The state we presented above only produces a one parameter class of states, hence it does not cover all the 3-qubit pure states. For more general 3-qubit states that lack the symmetry we used in different partitions, our approach does not apply straightforwardly. In that case, specific states can still always be self-tested using the method we used in section \ref{ss:swap}, for any guess of the measurements and the isometry (see~\cite{swap}).

\section{Conclusion}
In this paper, we have presented a procedure to self-test the $W_3$ state. This procedure makes use of the self-testing of two-qubit states and is robust to small errors. This method generalizes directly to $W$ state of more than 3 qubits.

Our approach of selftesting tripartite states by combining bipartite self-testing schemes allowed us to also self-test a continuous one-parameter family \eqref{eqn:general_three_qubit} of 3-qubit pure states.

The robust bounds that we obtained by SDP constitute the first demonstration of an application of the self-testing technique presented in~\cite{swap} to multipartite states. In agreement with bipartite studies, the SDP method provides better robustness than the analytical bounds. For instance, the fidelity remains above $90\%$ whever the error is bounded by $1\%$.

Note that although the SDP method is tighter than analytical one, the size of the SDP matrix can grow quickly as the number of the parties increases. Hence it would be useful to incorporate symmetric properties into the SDP matrix in order to make the resolution of the problem easier. This remains work for the future.

\section*{Note added}

While completing this work, we became aware of another approach to self-testing many-qubit states, where the explicit application to the $W_3$ state is discussed \cite{otherw3}.

\section*{Acknowledgments}
We acknowledge interesting discussions with Melvyn Ho. This work was supported by the National Research Foundation (partly through the Academic Research Fund Tier 3 MOE2012-T3-1-009) and the Ministry of Education, Singapore.

\appendix

\section{Proof of Eq. (\ref{aim})}
\label{app:isometry}
In this section, we show that the output of the isometry depicted in Fig.~\ref{fig:isometry} is indeed extracts a $W$ state whenever the observed statistics satisfy Eq.(\ref{require1}-\ref{require3}).
Recall that the output of the isometry on any general state is,
\begin{align}
\ket{\Psi'}&=\Phi\ket{\varPsi}_{ABC}\ket{000}_{A'B'C'} \nonumber\\
&=\sum_{a,b,c\in\{0,1\}} X_A^aX_B^bX_C^c\,P_A^aP_B^bP_C^c\ket{\varPsi}\ket{abc}.
\end{align}
All $P_{A}^{a}P_{B}^{b}P_{C}^{c}\ket{\varPsi}=0$ except $abc$ $\in$ $\{001,010,100\}$, hence
\begin{align}
\ket{\Psi'}=
&P_A^0 P_B^0 X_C P_C^1 \ket{\varPsi}\ket{001}\nonumber\\
+&P_A^0 X_B P_B^1 P_C^0 \ket{\varPsi}\ket{010}\nonumber\\
+&X_A P_A^1 P_B^0 P_C^0 \ket{\varPsi}\ket{100}.\nonumber
\end{align}
Using the anti-commuting relations~(\ref{xzb}-\ref{anticom4}), we can deduce that,
\begin{align}
\ket{\Psi'} &= P_A^0P_B^0P_C^0X_C\ket{\varPsi}\ket{001} \nonumber  \\
&+ P_A^0P_B^0P_C^0X_B \ket{\varPsi}\ket{010} \nonumber \\
&+ P_A^0P_B^0P_C^0X_A \ket{\varPsi}\ket{100} \nonumber.
\end{align}
Notice that $P_B^0X_A\ket{\varPsi}=P_B^0X_C\ket{\varPsi}$ and $P_A^0X_B\ket{\varPsi}=P_A^0X_C\ket{\varPsi}$, so

\begin{align}
\ket{\Psi'} = P_A^0P_B^0P_C^0X_C \ket{\varPsi} (\ket{001}+\ket{010}+\ket{100}).
\end{align}
This completes the proof.

\section{Detailed calculation of analytic bound on the norm}
\label{sec:detailcal}

This appendix provides the details of the derivation of the analytical bound (\ref{analytical}).

We first introduce the following lemma:
\begin{prop}
\label{lemma1}
Suppose each statistics in (\ref{require1}), (\ref{require2}) and (\ref{require3}), in the order of appearance, has a deviation $\varepsilon_1, \varepsilon_2,...,\varepsilon_{13}$ from its expected value, e.g.
\begin{align}
\bra{\varPsi}P_{A}^{0}P_{B}^{0}P_{C}^{1}\ket{\varPsi}=\frac{1}{3}+\varepsilon_1,
\end{align}
for the first term, then,
\begin{align}
\begin{cases}
\label{lemma1.1}
\|(P_{A}^{0}X_{B}Z_{B}+P_{A}^{0}Z_{B}X_{B})\ket{\varPsi}\|\leq\delta_{1}\\
\|(P_{A}^{0}X_{C}Z_{C}+P_{A}^{0}Z_{C}X_{C})\ket{\varPsi}\|\leq\delta_{2}\\
\|(P_{A}^{0}X_{B}-P_{A}^{0}X_{C})\ket{\varPsi}\|\leq\delta_{3}\\
\|(P_{A}^{0}Z_{B}+P_{A}^{0}Z_{C})\ket{\varPsi}\|\leq\delta_{4},\\
\end{cases}\\
\begin{cases}
\label{lemma1.2}
\|(P_{B}^{0}X_{A}Z_{A}+P_{B}^{0}Z_{A}X_{A})\ket{\varPsi}\|\leq\delta_{5}\\
\|(P_{B}^{0}X_{C}Z_{C}+P_{B}^{0}Z_{C}X_{C})\ket{\varPsi}\|\leq\delta_{6}\\
\|(P_{B}^{0}X_{A}-P_{B}^{0}X_{C})\ket{\varPsi}\|\leq\delta_{7}\\
\|(P_{B}^{0}Z_{A}+P_{B}^{0}Z_{C})\ket{\varPsi}\|\leq\delta_{8},\\
\end{cases}
\end{align} 
where $\delta_{i}$s are functions of $\varepsilon_i$s. 
\end{prop}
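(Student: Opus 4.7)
My plan is to derive each of the eight displayed inequalities by replaying the algebraic manipulations from the ideal proof (identities (\ref{proof:eqns1})--(\ref{xzb}) in the main text) while tracking how the individual $\varepsilon_i$'s propagate. The bounds $\delta_5,\ldots,\delta_8$ follow from $\delta_1,\ldots,\delta_4$ by the manifest symmetry between requirements (\ref{require2}) and (\ref{require3}) under interchange of the $P_A^0$ and $P_B^0$ conditioning, so I will focus on the $A$--$BC$ side.

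The first step is to exploit (\ref{require1}). Since the eight probabilities $\bra{\varPsi}P_A^aP_B^bP_C^c\ket{\varPsi}$ sum to one and three of them lie within $\varepsilon$ of $1/3$, the remaining five are non-negative and sum to at most $3\varepsilon$, so $\|P_A^aP_B^bP_C^c\ket{\varPsi}\|\le\sqrt{3\varepsilon}$ for each ``forbidden'' string. In particular, $\bra{\varPsi}P_A^0\ket{\varPsi}$ is controlled to within $O(\varepsilon)$ of $2/3$, a fact needed repeatedly.

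Next I would establish $\delta_3$ and $\delta_4$. Using $X_B^2=X_C^2=1$, locality, and $(P_A^0)^2=P_A^0$, one expands
\begin{align*}
\|P_A^0(X_B-X_C)\ket{\varPsi}\|^2 = 2\bra{\varPsi}P_A^0\ket{\varPsi}-2\bra{\varPsi}P_A^0X_BX_C\ket{\varPsi},
\end{align*}
whose right-hand side is $O(\varepsilon)$ by step one together with $\varepsilon_4$; hence $\delta_3=O(\sqrt{\varepsilon})$. The argument for $\delta_4$ is identical after replacing $X$ by $Z$ and accounting for the minus sign in $\bra{\varPsi}P_A^0Z_BZ_C\ket{\varPsi}\approx -2/3$.

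For the anti-commutator bounds $\delta_1,\delta_2$, I follow the squaring strategy of the ideal proof. First I obtain a robust version of the $D_C$ identity,
\begin{align*}
\bigl\|\bigl(P_A^0D_C-\tfrac{1}{\sqrt 2}(P_A^0X_B-P_A^0Z_B)\bigr)\ket{\varPsi}\bigr\|\le\eta,
\end{align*}
with $\eta=O(\sqrt{\varepsilon})$, by expanding the squared norm and feeding in the $\varepsilon$'s attached to $\bra{\varPsi}P_A^0X_BD_C\ket{\varPsi}$, $\bra{\varPsi}P_A^0Z_BD_C\ket{\varPsi}$, $\bra{\varPsi}P_A^0X_BZ_C\ket{\varPsi}$ together with step one. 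Applying $D_C$ once more, commuting it past $P_A^0$, $X_B$, $Z_B$ using locality, and re-invoking the identity itself to replace the resulting $P_A^0X_BD_C\ket{\varPsi}$ and $P_A^0Z_BD_C\ket{\varPsi}$, I arrive at
\begin{align*}
\bigl\|P_A^0\ket{\varPsi}-\tfrac{1}{2}P_A^0(X_B-Z_B)^2\ket{\varPsi}\bigr\|\le C\eta
\end{align*}
for an explicit constant $C$. Expanding $(X_B-Z_B)^2=2-X_BZ_B-Z_BX_B$ and rearranging yields $\delta_1$. The same scheme with $(X_B,Z_B)$ replaced by $(X_C,Z_C)$, using the last line of (\ref{proof:eqns1}), gives $\delta_2$.

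The main obstacle is bookkeeping rather than conceptual: each squaring step trades a small linear deviation for a $\sqrt{\varepsilon}$, and when these $\delta_i$'s are subsequently substituted into the reduction of the isometry output to derive (\ref{analytical}), yet another square root appears, explaining the $\varepsilon^{1/4}$ term there. Keeping the constants under control, and taking care to explicitly invoke locality whenever $D_C$ is commuted past operators on $A$ or $B$, is the tedious part.
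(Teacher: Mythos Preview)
Your plan matches the paper's proof essentially step for step: first control $\langle\varPsi|P_A^0|\varPsi\rangle$ from (\ref{require1}), then get $\delta_3,\delta_4$ by expanding the squared norms, then bound the distance $\eta$ between $P_A^0D_C|\varPsi\rangle$ and $\tfrac{1}{\sqrt 2}P_A^0(X_B-Z_B)|\varPsi\rangle$, and finally derive $\delta_1$ by iterating the $D_C$ relation. The paper packages the last step as the factorisation $P_A^0D_C^2-\tfrac12P_A^0(X_B-Z_B)^2=(P_A^0D_C+\tfrac{1}{\sqrt 2}P_A^0(X_B-Z_B))(P_A^0D_C-\tfrac{1}{\sqrt 2}P_A^0(X_B-Z_B))$ and bounds the left factor in operator norm, which is equivalent to your ``apply $D_C$ and reinvoke the identity'' move.

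One point needs correction. When you expand $\eta^2$, the term $\tfrac12\langle\varPsi|P_A^0(X_B-Z_B)^2|\varPsi\rangle$ produces the unmeasurable quantity $\langle\varPsi|P_A^0\{X_B,Z_B\}|\varPsi\rangle$. You correctly indicate that this is handled via $\langle\varPsi|P_A^0X_BZ_C|\varPsi\rangle$, i.e.\ by writing $Z_B=-Z_C+(Z_B+Z_C)$ and using $\delta_4$; but that substitution contributes a term of order $\|P_A^0X_B|\varPsi\rangle\|\cdot\delta_4=O(\sqrt{\varepsilon})$ inside $\eta^2$, not $O(\varepsilon)$. Hence $\eta=O(\varepsilon^{1/4})$, not $O(\sqrt{\varepsilon})$, and consequently $\delta_1,\delta_2$ are already $O(\varepsilon^{1/4})$. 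The $\varepsilon^{1/4}$ in (\ref{analytical}) is therefore inherited directly from these $\delta_i$'s, not from ``yet another square root'' downstream. This does not affect the validity of the Proposition as stated, but your accounting of where the quarter-power originates is off by one step.
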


\begin{proof}
We give the proof for (\ref{lemma1.1}), the proof for (\ref{lemma1.2}) is similar.

To be rigorous, we assume
\begin{align}
&\bra{\varPsi}P_{A}^{0}P_{B}^{0}P_{C}^{0}\ket{\varPsi} = \varepsilon_{14}\nonumber\\ &\bra{\varPsi}P_{A}^{0}P_{B}^{1}P_{C}^{1}\ket{\varPsi} = \varepsilon_{15}\nonumber\\
&\bra{\varPsi}P_{A}^{1}P_{B}^{0}P_{C}^{1}\ket{\varPsi} = \varepsilon_{16}\nonumber\\
&\bra{\varPsi}P_{A}^{1}P_{B}^{1}P_{C}^{0}\ket{\varPsi} = \varepsilon_{17}\nonumber\\
&\bra{\varPsi}P_{A}^{1}P_{B}^{1}P_{C}^{1}\ket{\varPsi} = \varepsilon_{18}\nonumber.\\
\end{align}
We can now write
\begin{align}
&\|P_{A}^{0}X_{B}\ket{\varPsi}\|=\sqrt{|\bra{\varPsi}P_{A}^{0}X_{B}X_{B}P_{A}^{0}\ket{\varPsi}|}\nonumber\\
&=\sqrt{|\bra{\varPsi}(P_{A}^{0})^{2}\ket{\varPsi}|}=\sqrt{|\bra{\varPsi}P_{A}^{0}\ket{\varPsi}|}\nonumber\\
&=\sqrt{\frac{2}{3}-(\varepsilon_1+\varepsilon_2+\varepsilon_{14}+\varepsilon_{15})}\nonumber\\
&=\sqrt{\frac{2}{3}-\delta_0}.
\end{align}
where $\delta_0=\varepsilon_1+\varepsilon_2+\varepsilon_{14}+\varepsilon_{15}$ and they all come from results which involve $P_A^0$.
Similarly,
\begin{align}
&\|P_{A}^{0}Z_{B}\ket{\varPsi}\|=\|P_{A}^{0}X_{C}\ket{\varPsi}\|=\|P_{A}^{0}Z_{C}\ket{\varPsi}\|\nonumber\\
&=\|P_{A}^{0}D_{C}\ket{\varPsi}\|=\sqrt{\frac{2}{3}-\delta_0}.
\end{align}
Then,
\begin{align}
&\|(P_{A}^{0}X_{B}-P_{A}^{0}X_{C})\ket{\varPsi}\|\nonumber\\
&=\sqrt{|\bra{\varPsi}P_{A}^{0}X_{B}X_{B}P_{A}^{0}+P_{A}^{0}X_{C}X_{C}P_{A}^{0}-2P_{A}^{0}X_{B}X_{C}P_{A}^{0}\ket{\varPsi}|}\nonumber\\
&=\sqrt{
\begin{aligned}
&|\bra{\varPsi}P_{A}^{0}X_{B}X_{B}P_{A}^{0}\ket{\varPsi}+\bra{\varPsi}P_{A}^{0}X_{C}X_{C}P_{A}^{0}\ket{\varPsi}\\
&-2\bra{\varPsi}P_{A}^{0}X_{B}X_{C}P_{A}^{0}\ket{\varPsi}|
\end{aligned}
}\nonumber\\
&=\sqrt{|(\frac{2}{3}-\delta_0)\times 2-2\times (\frac{2}{3}-\varepsilon_4)|}\nonumber\\
&=\sqrt{2|\delta_0-\varepsilon_4|}.
\end{align}
Using the same techniques, we are able to get
\begin{align}
\|(P_{A}^{0}Z_{B}-P_{A}^{0}Z_{C})\ket{\varPsi}\|=\sqrt{2|\delta_0-\varepsilon_4|}.
\end{align}
To get the first line of (\ref{lemma1.1}), we estimate the following distance,
\begin{align}
\label{ap1}
&\|(P_{A}^{0}D_{C}-\frac{P_{A}^{0}X_{B}-P_{A}^{0}Z_{B}}{\sqrt{2}})\ket{\varPsi}\|\nonumber\\
&=\sqrt{
\begin{aligned}
&|\bra{\varPsi}P_{A}^{0}D_{C}D_{C}P_{A}^{0}+\frac{1}{2}P_{A}^{0}X_{B}X_{B}P_{A}^{0}\nonumber\\
&+\frac{1}{2}P_{A}^{0}Z_{B}Z_{B}P_{A}^{0}-P_{A}^{0}X_{B}Z_{B}P_{A}^{0}\nonumber\\
&-\sqrt{2}P_{A}^{0}D_{C}X_{B}P_{A}^{0}+\sqrt{2}P_{A}^{0}D_{C}Z_{B}P_{A}^{0}\ket{\varPsi}|
\end{aligned}
}\nonumber\\
&=\sqrt{
\begin{aligned}
&|(\frac{2}{3}-\delta_0)\times 2-\sqrt{2}\times(\frac{1}{\sqrt{2}}\frac{2}{3}-\varepsilon_{6})\nonumber\\
&-\sqrt{2}\times(\frac{1}{\sqrt{2}}\frac{2}{3}-\varepsilon_{7})-\bra{\varPsi}P_{A}^{0}X_{B}Z_{B}\ket{\varPsi}|
\end{aligned}
}\nonumber\\
&=\sqrt{|\sqrt{2}(\sqrt{2}\delta_0-\varepsilon_{6}-\varepsilon_7)+\bra{\varPsi}P_{A}^{0}X_{B}Z_{B}\ket{\varPsi}|}\nonumber\\
&\leq\sqrt{
\begin{aligned}
&|\sqrt{2}(\sqrt{2}\delta_0-\varepsilon_{6}-\varepsilon_7)|+|\varepsilon_{6}|\nonumber\\
&+\sqrt{\frac{2}{3}-\delta_0}\times\sqrt{|2(\delta_0-\varepsilon_{4})|}\nonumber\\
\end{aligned}}\\
&=\dfrac{\delta_{1}}{(2+2\sqrt{2})\sqrt{2}},
\end{align}
where the $|\bra{\varPsi}P_{A}^{0}X_{B}Z_{B}\ket{\varPsi}|$ is estimated by using the triangle inequality $|a+b|\leq|a|+|b|$ and Cauchy–--Schwarz inequality $|a\cdot b|\leq|a|\cdot|b|$,
\begin{align}
&|\bra{\varPsi}P_{A}^{0}X_{B}Z_{B}\ket{\varPsi}|=|\bra{\varPsi}P_{A}^{0}X_{B}(Z_{C}+Z_{B}-Z_{C})\ket{\varPsi}|\nonumber\\
&\leq|\bra{\varPsi}P_{A}^{0}X_{B}Z_{C}\ket{\varPsi}|+|\bra{\varPsi}P_{A}^{0}X_{B}(Z_{B}-Z_{C})\ket{\varPsi}|\nonumber\\
&=|\varepsilon_{6}|+|\bra{\varPsi}P_{A}^{0}X_{B}(Z_{B}-Z_{C})\ket{\varPsi}|\nonumber\\
&\leq|\varepsilon_{6}|+\|P_{A}^{0}X_{B}\ket{\varPsi}\|\cdot\|(Z_{B}-Z_{C})\ket{\varPsi}\|\nonumber\\
&=|\varepsilon_{6}|+\sqrt{\frac{2}{3}-\delta_0}\times\sqrt{|2(\delta_0-\varepsilon_{4})|}.
\end{align}
In order to estimate the $\|(P_{A}^{0}X_{B}Z_{B}+P_{A}^{0}Z_{B}X_{B})\ket{\varPsi}\|$, first consider,
\begin{align}
&(P_{A}^{0}D_{C})^{2}\ket{\varPsi}\nonumber\\
&=(\frac{P_{A}^{0}X_{B}-P_{A}^{0}Z_{B}}{\sqrt{2}}+P_{A}^{0}D_{C}-\frac{P_{A}^{0}X_{B}-P_{A}^{0}Z_{B}}{\sqrt{2}})^{2}\ket{\varPsi}\nonumber\\
&=(\frac{P_{A}^{0}X_{B}-P_{A}^{0}Z_{B}}{\sqrt{2}})^{2}\ket{\varPsi}\nonumber\\
&+(P_{A}^{0}D_{C}-\frac{P_{A}^{0}X_{B}-P_{A}^{0}Z_{B}}{\sqrt{2}})^{2}\ket{\varPsi}\nonumber\\
&+2(\frac{P_{A}^{0}X_{B}-P_{A}^{0}Z_{B}}{\sqrt{2}})(P_{A}^{0}D_{C}-\frac{P_{A}^{0}X_{B}-P_{A}^{0}Z_{B}}{\sqrt{2}})\ket{\varPsi}.
\end{align}
The first term contains the anticommutative terms while the last two terms have a same factor. By using the identity $D_{C}^{2}\ket{\varPsi}=X_{B}^{2}\ket{\varPsi}=Z_{B}^{2}\ket{\varPsi}=\ket{\varPsi}$, we can easily deduce that,
\begin{align}
&\frac{(P_{A}^{0}X_{B}Z_{B}+P_{A}^{0}Z_{B}X_{B})\ket{\varPsi}}{\sqrt{2}}\nonumber\\
=&(P_{A}^{0}D_{C}+\frac{P_{A}^{0}X_{B}-P_{A}^{0}Z_{B}}{\sqrt{2}})\nonumber\\
&(P_{A}^{0}D_{C}-\frac{P_{A}^{0}X_{B}-P_{A}^{0}Z_{B}}{\sqrt{2}})\ket{\varPsi},
\end{align}
and the norm can be estimated,
\begin{align}
&\|\frac{(P_{A}^{0}X_{B}Z_{B}+P_{A}^{0}Z_{B}X_{B})\ket{\varPsi}}{\sqrt{2}}\|\nonumber\\
&\leq\|P_{A}^{0}D_{C}(P_{A}^{0}D_{C}-\frac{P_{A}^{0}X_{B}-P_{A}^{0}Z_{B}}{\sqrt{2}})\ket{\varPsi}\|\nonumber\\
&+\frac{1}{\sqrt{2}}\|P_{A}^{0}X_{B}(P_{A}^{0}D_{C}-\frac{P_{A}^{0}X_{B}-P_{A}^{0}Z_{B}}{\sqrt{2}})\ket{\varPsi}\|\nonumber\\
&+\frac{1}{\sqrt{2}}\|P_{A}^{0}Z_{B}(P_{A}^{0}D_{C}-\frac{P_{A}^{0}X_{B}-P_{A}^{0}Z_{B}}{\sqrt{2}})\ket{\varPsi}\|\nonumber\\
&\leq(\|P_{A}^{0}D_{C}\|_{\infty}+\frac{1}{\sqrt{2}}\|P_{A}^{0}X_{B}\|_{\infty}+\frac{1}{\sqrt{2}}\|P_{A}^{0}Z_{B}\|_{\infty})\nonumber\\
&\times\|(P_{A}^{0}D_{C}-\frac{P_{A}^{0}X_{B}-P_{A}^{0}Z_{B}}{\sqrt{2}})\ket{\varPsi}\|.
\end{align}
The infinite norm can be estimated
\begin{align}
&\|P_{A}^{0}D_{C}\|_{\infty}\leq\|P_{A}^{0}\|_{\infty}\|D_{C}\|_{\infty}\nonumber\\
&=\|P_{A}^{0}\|_{\infty}\|P_{D}^{0}-P_{D}^{1}\|_{\infty}\leq\|P_{A}^{0}\|_{\infty}(\|P_{D}^{0}\|_{\infty}+\|P_{D}^{1}\|_{\infty})\nonumber\\
&=2.
\end{align}
Similar for $\|P_{A}^{0}Z_{B}\|_{\infty}$ and $\|P_{A}^{0}X_{B}\|_{\infty}$. Then, we get
\begin{align}
\|\frac{(P_{A}^{0}X_{B}Z_{B}+P_{A}^{0}Z_{B}X_{B})\ket{\varPsi}}{\sqrt{2}}\|\leq \dfrac{\delta_{1}}{\sqrt{2}}.
\end{align}
The other relations can be proved similarly.
\end{proof}

Using the above proposition, we can now turn into the robustness of the $W_{3}$ state. We shall still use the same isometry as described in Figure (\ref{fig:isometry}), irrespective of the errors in the statistics. The output state can always be displayed as (\ref{general}), the problem then is whether we can prove it's close to the target state $\ket{\textrm{junk}}_{ABC}\ket{W_{3}}_{A'B'C'}$. However, it's not easy to figure out what is the exact form of this target state.
What we want to do first is to estimate the distance
\begin{align}
\|\ket{\Psi'}-\ket{\tilde{\Psi}}\|,
\end{align}
where the $\ket{\Psi'}$ is given in (\ref{general}) and $\ket{\tilde{\Psi}}$ is given in (\ref{aim}) as,
\begin{align}
\ket{\tilde{\Psi}}=P_{A}^{0}P_{B}^{0}P_{C}^{0}X_{C}\ket{\varPsi}(\ket{001}+\ket{010}+\ket{100}).
\end{align}
There would be 8 terms regarding to the ancillary qubits. We need to estimate the norm of each term. Since there are too many terms involved, we shall only show explicitly some of them, for instance the term
\begin{align}
&\|\frac{1}{8}(1+Z_{A})(1+Z_{B})X_{C}(1-Z_{C})\ket{\varPsi}\ket{001}\nonumber\\
&-P_{A}^{0}P_{B}^{0}P_{C}^{0}X_{C}\ket{\varPsi}\ket{001}\|\nonumber\\
&=\|P_{A}^{0}P_{B}^{0}X_{C}P_{C}^{1}\ket{\varPsi}\ket{001}-P_{A}^{0}P_{B}^{0}P_{C}^{0}X_{C}\ket{\varPsi}\ket{001}\|.
\end{align}

From Proposition \ref{lemma1}, we could see that the operations $Z$ and $X$ are almost anticommutative. Thus, the operators $Z$ and $X$ in (\ref{general}) other than $\ket{001}$, $\ket{010}$ and $\ket{100}$ can always be moved to the right of $(1-Z)$ with the cost of a small error. Using (\ref{lemma1.2}), we have,
\begin{align}
&\|P_{A}^{0}P_{B}^{0}X_{C}P_{C}^{1}\ket{\varPsi}\ket{001}-P_{A}^{0}P_{B}^{0}P_{C}^{0}X_{C}\ket{\varPsi}\ket{001}\|\nonumber\\
=&\|P_{A}^{0}P_{B}^{0}X_{C}Z_{C}\ket{\varPsi}\ket{001}+P_{A}^{0}P_{B}^{0}Z_{C}X_{C}\ket{\varPsi}\ket{001}\|\nonumber\\
\leq &\|P_{B}^{0}\|_{\infty}\|P_{A}^{0}X_{C}Z_{C}\ket{\varPsi}\ket{001}+P_{A}^{0}Z_{C}X_{C}\ket{\varPsi}\ket{001}\|\nonumber\\
=&\delta_{1}.
\end{align}
Similarly, the terms with $\ket{010}$ and $\ket{100}$ can also be shown to be bounded by the same errors. For the other 5 terms in (\ref{general}), by using the properties $X^2=1$, it shows that,
\begin{align}
&\|P_{A}^{0}P_{B}^{0}P_{C}^{0}\ket{\varPsi}\ket{000}+P_{A}^{0}X_{B}P_{B}^{1}X_{C}P_{C}^{1}\ket{\varPsi}\ket{011}\nonumber\\
&+X_{A}P_{A}^{1}P_{B}^{0}X_{C}P_{C}^{1}\ket{\varPsi}\ket{101}+X_{A}P_{A}^{1}X_{B}P_{B}^{1}P_{C}^{0}\ket{\varPsi}\ket{110}\nonumber\\
&+X_{A}P_{A}^{1}X_{B}P_{B}^{1}X_{C}P_{C}^{1}\ket{\varPsi}\ket{111}\|\nonumber\\
\leq &\|P_{A}^{0}P_{B}^{0}P_{C}^{0}\ket{\varPsi}\ket{000}\|+\|P_{A}^{0}X_{B}P_{B}^{1}X_{C}P_{C}^{1}\ket{\varPsi}\ket{011}\|+\nonumber\\
&\|X_{A}P_{A}^{1}P_{B}^{0}X_{C}P_{C}^{1}\ket{\varPsi}\ket{101}\|+\|X_{A}P_{A}^{1}X_{B}P_{B}^{1}P_{C}^{0}\ket{\varPsi}\ket{110}\|\nonumber\\
&+\|X_{A}P_{A}^{1}X_{B}P_{B}^{1}X_{C}P_{C}^{1}\ket{\varPsi}\ket{111}\|\nonumber\\
=&|\bra{\varPsi}P_{A}^{0}P_{B}^{0}P_{C}^{0}\ket{\varPsi}|+|\bra{\varPsi}P_{A}^{0}P_{B}^{1}P_{C}^{1}\ket{\varPsi}|\nonumber\\
&+|\bra{\varPsi}P_{A}^{1}P_{B}^{0}P_{C}^{1}\ket{\varPsi}|+|\bra{\varPsi}P_{A}^{1}P_{B}^{1}P_{C}^{0}\ket{\varPsi}|\nonumber\\
&+|\bra{\varPsi}P_{A}^{1}P_{B}^{1}P_{C}^{1}\ket{\varPsi}|\nonumber\\
=&\varepsilon_{14}+\varepsilon_{15}+\varepsilon_{16}+\varepsilon_{17}+\varepsilon_{18}\nonumber\\
=&\delta_2. 
\end{align} 
Thus, we then obtain the distance between the state (\ref{general}) and (\ref{aim}) as
\begin{align}
\|\ket{\Psi'}-\ket{\tilde{\Psi}}\|\leq 3\delta_1+\delta_2.
\end{align}
The norm of $\ket{\tilde{\Psi}}$ can be estimated,
\begin{align}
&\|\ket{\tilde{\Psi}}\|^2=\|P_{A}^{0}P_{B}^{0}P_{C}^{0}X_{C}\ket{\varPsi}(\ket{001}+\ket{010}+\ket{100})\|^2 \nonumber\\
=& 3\|P_{A}^{0}P_{B}^{0}P_{C}^{0}X_{C}\ket{\varPsi}\|^2=3|\bra{\varPsi}X_C P_{A}^{0}P_{B}^{0}P_{C}^{0}X_{C}\ket{\varPsi}|\nonumber\\
=& 3|\bra{\varPsi}X_C P_{A}^{0}P_{B}^{0}(P_{C}^{0}X_{C}-X_{C}P_{C}^{1}+X_{C}P_{C}^{1})\ket{\varPsi}|\nonumber\\
\leq & 3|\bra{\varPsi}X_C P_{A}^{0}P_{B}^{0}(P_{C}^{0}X_{C}-X_{C}P_{C}^{1})\ket{\varPsi}|\nonumber \\
&+3|\bra{\varPsi}X_C P_{A}^{0}P_{B}^{0}X_{C}P_{C}^{1})\ket{\varPsi}|\nonumber\\
\leq & 3\|P_{B}^{0}X_C\ket{\varPsi}\|\cdot\|P_{A}^{0}(P_{C}^{0}X_{C}-X_{C}P_{C}^{1})\ket{\varPsi}\|\nonumber\\
&+3|\bra{\varPsi}X_C P_{A}^{0}P_{B}^{0}X_{C}P_{C}^{1})\ket{\varPsi}|\nonumber\\
\leq & 3\cdot\delta_1+3|\bra{\varPsi}X_C^2 P_{A}^{0}P_{B}^{0}P_{C}^{1})\ket{\varPsi}|\nonumber\\
=& 1-3\varepsilon_1+3\delta_1,
\end{align}
and,
\begin{align}
&\|\ket{\tilde{\Psi}}\|^2\nonumber\\
=& 3|\bra{\varPsi}X_C P_{A}^{0}P_{B}^{0}(P_{C}^{0}X_{C}-X_{C}P_{C}^{1}+X_{C}P_{C}^{1})\ket{\varPsi}|\nonumber\\
\geq & 3|\bra{\varPsi}X_C P_{A}^{0}P_{B}^{0}X_{C}P_{C}^{1})\ket{\varPsi}|\nonumber\\
&-3|\bra{\varPsi}X_C P_{A}^{0}P_{B}^{0}(P_{C}^{0}X_{C}-X_{C}P_{C}^{1})\ket{\varPsi}|\nonumber\\
\geq & 1-3\varepsilon_1-3\delta_1.
\end{align}
These results imply that,
\begin{align}
\|\ket{\tilde{\Psi}}-\ket{\textrm{junk}}_{ABC}\ket{W_{3}}_{A'B'C'}\| \leq 1-\sqrt{1-3\varepsilon_1-3\delta_1}
\end{align}
where,
\begin{align}
\ket{\textrm{junk}}_{ABC}\ket{W_{3}}_{A'B'C'}=\frac{\ket{\tilde{\Psi}}}{\|\ket{\tilde{\Psi}}\|}.
\end{align}
Finally,
\begin{align}
&\|\ket{\Psi'}-\ket{\textrm{junk}}_{ABC}\ket{W_{3}}_{A'B'C'}\|\nonumber\\
\leq & \|\ket{\Psi'}-\ket{\tilde{\Psi}}\|+\|\ket{\tilde{\Psi}}-\ket{\textrm{junk}}_{ABC}\ket{W_{3}}_{A'B'C'}\|\nonumber\\
= & 3\delta_1+\delta_2+1-\sqrt{1-3\varepsilon_1-3\delta_1}.
\end{align}
As we have said, without losing the generality, we take the maximum $\varepsilon$ among $\varepsilon_i$s for notational simplicity. Then the relaxed observation requirement will not affect the robustness bound proved below. So a conservative upper bound will be
\begin{align}
\|\ket{\Psi'}&-\ket{\textrm{junk}}_{ABC}\ket{W_{3}}_{A'B'C'}\|\nonumber\\
&\leq  \frac{13}{2}\varepsilon+9(2+2\sqrt{2})(\frac{20}{3})^{\frac{1}{4}}\frac{9\sqrt{15}+6\sqrt{5}}{20}\varepsilon^{\frac{3}{4}}\nonumber\\
&+9(2+2\sqrt{2})(\frac{20}{3})^{\frac{1}{4}}\varepsilon^{\frac{1}{4}}\nonumber\\
&\approx 7.5\varepsilon+119.2\varepsilon^{\frac{3}{4}}+49.4\varepsilon^{\frac{1}{4}}.
\end{align}

\section{Proof of \textbf{Lemma} \ref{lemma2}}
\label{general3qubitproof}
In principle, an isometry of the form described in Fig.\ref{fig:isometry}, can be constructed with: 

\begin{eqnarray}
Z_A = -A_0, X_A = A_1, \nonumber \\
Z_B = -B_0, X_B = B_1, \nonumber \\
Z_C = \frac{C_0+C_1}{2\cos\mu}, X_C = \frac{C_0-C_1}{2\sin\mu},
\end{eqnarray}
resulting in \eqref{general}. 

To evaluate this state, first note from $\average{P_A^1P_B^0 + P_A^0P_B^1 + P_A^0P_B^0} = 1$, that we must have $P_A^1P_B^1 = 0$. 

Second, following the self testing of non-maximally entangled qubits, maximal violation of the tilted Bell inequality $\beta$ implies,
\begin{eqnarray}
P_A^0P_C^1\ket{\Psi} = P_A^0P_B^0\ket{\Psi}, \nonumber \\
P_B^0P_C^1\ket{\Psi} = P_B^0P_A^0\ket{\Psi}, \nonumber \\
P_A^0P_C^0\ket{\Psi} = P_A^0P_B^1\ket{\Psi}, \nonumber \\
P_B^0P_C^0\ket{\Psi} = P_B^0P_A^1\ket{\Psi}, \nonumber \\
P_A^0X_CP_C^1\ket{\Psi} = P_A^0P_C^0X_C\ket{\Psi}, \nonumber \\
P_B^0X_CP_C^1\ket{\Psi} = P_B^0P_C^0X_C\ket{\Psi}, \nonumber \\
P_A^0X_CP_C^0\ket{\Psi}/\gamma = P_A^0X_BP_B^1\ket{\Psi}, \nonumber \\
P_B^0X_CP_C^0\ket{\Psi}/\gamma = P_B^0X_AP_A^1\ket{\Psi},
\end{eqnarray}
which implies that
\begin{eqnarray}
P_A^0P_B^0P_C^0 \ket{\Psi} = P_A^0P_C^1P_C^0\ket{\Psi} = 0, \nonumber \\
P_A^0P_B^1P_C^1 \ket{\Psi} = P_A^0P_C^0P_C^1\ket{\Psi} = 0, \nonumber \\
P_A^1P_B^0P_C^1 \ket{\Psi} = P_C^0P_B^0P_C^1\ket{\Psi} = 0,
\end{eqnarray}
and 
\begin{eqnarray}
P_A^0P_B^0X_CP_C^1 \ket{\Psi} &=& X_CP_C^1P_A^0P_B^0 \ket{\Psi} \nonumber \\
&=& X_CP_C^1P_A^0P_C^1 \ket{\Psi} \nonumber \\
&=& P_A^0 X_CP_C^1\ket{\Psi} \nonumber \\
&=& P_B^0 X_CP_C^1\ket{\Psi},
\end{eqnarray}
but also
\begin{eqnarray}
(P_A^0X_BP_B^1)P_C^0 \ket{\Psi} &=& P_C^0(1/\gamma P_A^0X_CP_C^1) \ket{\Psi}\nonumber \\
&=& 1/\gamma P_A^0P_C^0X_C \ket{\Psi} \nonumber \\ 
&=& 1/\gamma P_A^0X_CP_C^1 \ket{\Psi},
\end{eqnarray}
and similary,
\begin{eqnarray}
(P_B^0X_AP_A^1)P_C^0 \ket{\Psi} &=& P_C^0(1/\gamma P_B^0X_CP_C^1) \ket{\Psi}\nonumber \\
&=& 1/\gamma P_B^0P_C^0X_C \ket{\Psi}\nonumber \\ 
&=& 1/\gamma P_B^0X_CP_C^1 \ket{\Psi}.
\end{eqnarray}
Combining these relations gives,
\begin{eqnarray}
\Phi(\ket{\Psi}	) &=& \gamma P_A^0X_CP_C^1 \ket{\Psi} (\ket{001}+\ket{010}+\gamma \ket{001}) \nonumber \\
&=& \ket{\textrm{junk}} \otimes \ket{\psi_\gamma}
\end{eqnarray} 
as claimed.

\end{document}